\def\N{\mathbb N}
\newtheorem{theorem}{Theorem}
\newtheorem{remark}{Remark}
\newenvironment{proof}{\paragraph{Proof}}{\hfill$\square$}
\newtheorem{prop}{Proposition}
\begin{document}

\title{Rich dynamics and anticontrol of extinction in a prey-predator system }
\vspace{5mm}

\author [rm1,rm2]{Marius-F. Danca}\corref{cor1}
\author [rm3,rm4]{Michal Fe\v{c}kan}
\author [rm5,rm6]{Nikolay Kuznetsov}
\author [rm7] {Guanrong Chen}
\cortext[cor1]{Corresponding author}
\address[rm1]{Dept. of Mathematics and Computer Science, Avram Iancu University of Cluj-Napoca, Romania}
\address[rm2]{Romanian Institute of Science and Tecçhnology, Cluj-Napoca, Romania\\
email: danca@rist.ro, orcid.org/0000-0001-7699-8709}
\address[rm3]{Dept. of Mathematical Analysis and Numerical Mathematics, Faculty of Mathematics, Physics and Informatics, Comenius University in Bratislava, Slovak Republic,}
\address[rm4]{Mathematical Institute, Slovak Academy of Sciences, Slovak Republic\\
email:Michal.Feckan@fmph.uniba.sk }
\address[rm5]{Dept. of Applied Cybernetics, Saint-Petersburg State University, Russia}
\address[rm6]{Dept. of Mathematical Information Technology, University of Jyv\"{a}skyl\"{a}, Finland\\
email: nkuznetsov239@gmail.com}
\address[rm7]{Dept. of Electronic Engineering, City University of Hong Kong, Hong Kong, China\\
email: eegchen@cityu.edu.hk}

\begin{abstract}

This paper reveals some new and rich dynamics of a two-dimensional prey-predator system and to anticontrol the extinction of one of the species. For a particular value of the bifurcation parameter, one of the system variable dynamics is going to extinct, while another remains chaotic. To prevent the extinction, a simple anticontrol algorithm is applied so that the system orbits can escape from the vanishing trap. As the bifurcation parameter increases, the system presents quasiperiodic, stable, chaotic and also hyperchaotic orbits. Some of the chaotic attractors are Kaplan-Yorke type, in the sense that the sum of its Lyapunov exponents is positive. Also, atypically for undriven discrete systems, it is numerically found that, for some small parameter ranges, the system seemingly presents strange nonchaotic attractors. It is shown both analytically and by numerical simulations that the original system and the anticontrolled system undergo several Neimark-Sacker bifurcations. Beside the classical numerical tools for analyzing chaotic systems, such as phase portraits, time series and power spectral density, the '0-1' test is used to differentiate regular attractors from chaotic attractors.
\end{abstract}

\begin{keyword} Prey-predator system; anticontrol; Neimark-Sacker bifurcation; '0-1' test; Strange nonchaotic attractor
\end{keyword}

\maketitle

\section{Introduction}

During the last few decades, prey-predator systems have received a renewal of attention (see e.g. \cite{prey2,prey4,muica,prey6,baga_altu,prey7,lafel,dancax,prey8,prey9,prey1}).
This paper considers the discrete variant of a continuous-time Lotka-Volterra prey-predator system \cite{dancax,lafel}, in which the competition between two species $x$ and $y$ is modeled by the following iterative equations:

\begin{equation}\label{eq1}
\begin{split}
&  x_{n+1}=ax_n(1-x_n)-bx_ny_n,\\
 & y_{n+1}=dx_ny_n,
 \end{split}
  \end{equation}
where $a,b,d$ are positive parameters and $x_n$ and $y_n$ denote the prey and the predator densities respectively in year (generation) $n$, $bx$ represents the number of prey individuals consumed per unit area per unit time by an individual predator, and $dx_ny_n$ is the predator response \cite{prey2}.

Some of the complex dynamical behaviors and stability aspects of system \eqref{eq1} are presented in \cite{dancax}. In this paper, more rich dynamics are revealed, such as hyperchaotic attractors, chaotic attractors in the sense of Kaplan-Yorke, and strange nonchaotic attractors (SNAs). Also, an anticontrol algorithm is designed and utilized for preventing the extinction dynamics of the predator variable $y_n$.

The paper is structured as follows: Section 2 presents the system dynamics, including Neimark-Sacker bifurcation, quasiperiodic attractors, stable attractors, chaotic attractors and hyperchaotic attractors. Section 3 presents a particular chaotic attractor at $a=4$ when the variable $y_n$ vanishes after a few thousands of iterations. Also, an anticontrol algorithm is designed and used to prevent $y_n$ from extinction. Some comments are given in the Conclusion section, where SNAs are briefly discussed.

\section{System dynamics}
From the biology standpoint, assume that the system is defined on the first closed quadrant $\mathbb{R}_+^2=\{(x,y):x\geq0,y\geq0\}$, with initial conditions satisfy $x_0:=x(0)>0$ and $y_0:=y(0)>0$.

A numerical approximation of the boundedness domain $\mathcal{D}\subset \mathbb{R}_+ ^2$ in the parameter space $(a,d)$, on which the system is defined,
is colored yellow in the lattice $[0,8]\times[0,10]\subset \mathbb{R}^2_+$ in Fig. \ref{fig1} (a). Outside the yellow domain, i.e. for $a\notin[0,8]$ and $d\notin[0,10]$, the system might be divergent (grey region). As can be seen, the frontier of the boundedness domain presents some zones whose structures are rater complex, with fractal characteristics (zoomed zone $D1$) and rectilinear zones (zoomed zone  $D2$).

Fig. \ref{fig00} reveals the extremely rich and complex system dynamics: regular dynamics, such as mode-locking (or stable periodic orbits), quasiperiodicity (or invariant circles), and chaotic and hyperchaotic dynamics. Generally, relatively small periodic windows are immersed in larger quasiperiodic windows.

The parameter $b$ does not influence the system dynamics. Therefore, hereafter in all numerical experiments, set $b=0.2$ and, unless specified, $d=3.5$.

Local (finite-time) Lyapunov exponents (LEs), are determined numerically from the system equations. Except for two different attraction basins, which appear for some range of the bifurcation parameter $a$ within the existence domain, the values of the local LEs are approximatively the same. Hereafter, the local LEs are simply called LEs and the spectrum is denoted $\Lambda=\{\lambda_1,\lambda_2\}$, with $\lambda_1>\lambda_2$.

Unless specified, the iteration number, necessarily to obtain meaningful numerical results, is set to $n=5e5$.
Denote by $P_i$, $i=1,2,...$, some most important points in the partition of the parameter range $a\in[2,4]$ and by $NS$ the point corresponding to the Neimark-Sacker bifurcation. Zero LEs are considered having at least 4 zero decimals, i.e. with error less than $1e-5$.

In order to analyze the qualitative changes of the system and to follow the changes in the system dynamics, consider the bifurcation diagram on the plane $(a,x)$, together with Lyapunov spectrum $\Lambda$, for $a\in[2,4]$, where the most important dynamics are shown in Fig. \ref{fig00}.
Note that the particular shape of the maximal LE begining from $P_1$ and $P_2$, resembles the existence of SNAs, a notion described first by Grebogi et al. in 1984 \cite{sna}.
Supplementarily, the binary '0-1' test (see Apendix  \ref{ee}) is utilized to distinguish clearly regular attractors from chaotic attractors. This test indicates a value close to 0 for regular dynamics, and a value that tends to 1 for chaos.

To study the nature of the attractors, phase plots, time series, LEs, normalized power spectral density (PSD) and '0-1' test are utilized. As the time series are real, PSD is two-sided symmetric and, therefore, only the left-side is discussed here.

The PSD is used to unveil the birth of new frequencies, one of the presumably paths to chaos for this system. From the evolution of the asymptotic growth rate $K$ as function of $a$ (Fig. \ref{fig00} (c)), one can see that, except the range $a\in(3.2,3.25)$, where the transition from regular motion to chaotic motion is completed gradually. This fact suggests the existence of SNAs (see Conclusion and Discussions Section). For the other intervals of $a$, $K$ changes abruptly between values 0 and 1.

Attractors with: a) $\lambda_{1,2}<0$, a single frequency in PSD (with potential harmonics) and $K=0$, are considered to be \emph{stable periodic orbits}; b) $\lambda_1>0$, $\lambda_2<0$, a broadband in PSD and $K=1$, are considered to be \emph{chaotic attractors}; c) $\lambda_1=0$, $\lambda_2<0$, several discrete peaks in PSD and $K=0$, whose orbit points in the phase space never repeat itself, are called \emph{quasiperiodic orbits} (or \emph{invariant circles}); d) $\lambda_{1,2}>0$, broadband in PSD and $K\approx 1$, are considered to be \emph{hyperchaotic attractors}; e) $\lambda_1>0>\lambda_2$, for which $\sum\lambda_i>0$, are called \emph{Kaplan-Yorke (K--Y) chaotic attractors} \cite{kaplan}, and f) $\lambda_2\leq \lambda_1\leq 0$ and $0<K<1$, are called \emph{strange nonchaotic attractors}. Stable periodic orbits and quasiperiodic orbits are \emph{regular orbits}.

The system reveals a lot of interesting qualitative changes in its dynamics. In this paper the interest is mainly on the Neimark-Sacker ($N-S$) bifurcation, one of the main phenomenon of this system, and on its chaotic dynamics.

An important characteristic of the system is the multistability, or coexistence of attractors (see the zoomed portion in the bifurcation diagram Fig. \ref{fig00} (d), and also Fig. \ref{figg} (a)). Thus, for the same value of the parameter $a$, i.e., $a=3.381$ (with $d=3.5$) and different initial conditions $(0.5,1.2)$ and $(0.006,3.65)$, one obtains two different attractors: a quasiperiodic orbit (red plot) and a stable periodic orbit (blue plot), respectively (Fig. \ref{figg} (b)). Also, for $a=3.3815$ and initial conditions $(0.5,1.2)$ and $(0.006,3.65)$, other two different attractors emerge (Fig. \ref{figg} (c)). For clarity, the attractive points of the stable orbit are represented by filled circles. Comparing Figs. \ref{figg} (b), (c), one can see that the quasiperiodic orbit (red plot), composed of 6 incomplete island-like sets, degenerates for $a$ passing through a crisis value $a^*$ (slightly lower than $a=3.3815$), where the attractor begins to break. Thus, from the triangular shape, they transform to 6 sets of triplets (points), marking the corner of the former island-like set scenario and forming the attractive periodic points of a period-18 stable orbit. This trifurcation-like scenario resembles the birth of chaos via the Period Three Theorem. Contrarily, the 12 blue points of the stable orbit transform to 12 closed quasiperiodic island-like sets.

Regarding the coexistence of different attraction basins, determined by the coexisting attractors Fig. \ref{fig00} is obtained with the initial condition $(0.006, 3.65)$. Unspecified initial conditions mean that they do not influence the numerical results.

Denote the three fixed points of the system by
\begin{equation*}
X_1^*(0,0), X_2^*\Big(1-\frac{1}{a},0\Big), X_3^*\Big({\frac{1}{d},\frac{a}{b}\Big (1-\frac{1}{d}\Big)-\frac{1}{b}\Big)}\Big),
\end{equation*}
Consider, next, $a$ and $d$ as bifurcation parameters in the plane $(a,d)$.

\begin{prop} \label{prop1}For all $b\in \mathbb{R}_+$:

(i) If $a<1$ $X_1^*$ is stable for all $d\in \mathbb{R}_+$;

(ii) If $d-a/(a-1)<0$, with $a\in(1,3)$, $X_2^*$  is stable for all $d\in \mathbb{R}_+$;

(iii) If $\max\{3a/(a+3), a/(a-1)\}<d<2a/(a-1)$, $a>1$, $X_3^*$ is stable.

\end{prop}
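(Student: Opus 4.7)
The plan is to reduce each stability claim to a standard computation with the Jacobian of the map and the Schur--Cohn (Jury) criterion for a $2\times 2$ linear map at a fixed point. Writing the Jacobian of \eqref{eq1} as
\[
J(x,y)=\begin{pmatrix} a-2ax-by & -bx\\ dy & dx\end{pmatrix},
\]
stability of a fixed point is characterized by requiring both eigenvalues of $J$ at that point to have modulus less than one, which for a $2\times 2$ matrix with trace $T$ and determinant $D$ is equivalent to the three inequalities $1-T+D>0$, $1+T+D>0$ and $|D|<1$.

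For $X_1^*$ the Jacobian is diagonal with entries $a$ and $0$, so the first statement is immediate from $|a|<1$ together with positivity of $a$. For $X_2^*=(1-1/a,0)$ the Jacobian is upper triangular with eigenvalues $2-a$ and $d(a-1)/a$; the standard two conditions $|2-a|<1$ and $|d(a-1)/a|<1$ combined with $a,d>0$ yield exactly $a\in(1,3)$ and $d<a/(a-1)$, which is the condition in (ii).

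The main work is part (iii). At $X_3^*=(1/d,\,(a/b)(1-1/d)-1/b)$ one simplifies the $(1,1)$ entry of $J$ by substituting $by^*=a(1-1/d)-1$, which gives
\[
J(X_3^*)=\begin{pmatrix} 1-a/d & -b/d\\ dy^* & 1\end{pmatrix},
\]
so $T=2-a/d$ and $D=1-a/d+by^*=a-2a/d$. Feeding these into the three Jury conditions produces: (a) $1-T+D=a-1-a/d>0$, equivalently $d>a/(a-1)$; (b) $1+T+D=3+a-3a/d>0$, equivalently $d>3a/(a+3)$; and (c) $|D|<1$, which splits into the upper bound $d<2a/(a-1)$ and the lower bound $d>2a/(a+1)$.

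The one subtle point, and the place I would be most careful, is showing that the auxiliary bound $d>2a/(a+1)$ from condition (c) is already implied by $d>\max\{3a/(a+3),\,a/(a-1)\}$, so that it does not appear in the statement of (iii). This follows from an elementary comparison: $a/(a-1)>2a/(a+1)$ iff $a<3$, while $3a/(a+3)>2a/(a+1)$ iff $a>3$, so in every case one of the two bounds retained in the proposition already dominates $2a/(a+1)$. Once this redundancy is verified, the intersection of all Jury inequalities collapses precisely to $\max\{3a/(a+3),a/(a-1)\}<d<2a/(a-1)$, with the restriction $a>1$ entering through the positivity of the denominator in $a/(a-1)$, completing (iii).
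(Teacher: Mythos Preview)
Your proof is correct. Both the paper and you linearize at the fixed points and impose that the spectrum lie in the open unit disk, so the underlying idea is the same; parts (i) and (ii) are identical to the paper's treatment. For (iii) the paper computes the eigenvalues of $J(X_3^*)$ explicitly, splitting into real and complex cases according to the sign of the discriminant $\Delta(a,d)$, notes that the modulus in the complex case is $A(a,d)=\sqrt{a-2a/d}$, and then says the stability region follows ``after some simple calculations, omitted here.'' Your route through the Schur--Cohn/Jury conditions on $T=2-a/d$ and $D=a-2a/d$ avoids the real/complex case split entirely and makes those omitted calculations fully explicit; in particular, your observation that the auxiliary lower bound $d>2a/(a+1)$ from $D>-1$ is dominated by $a/(a-1)$ when $a<3$ and by $3a/(a+3)$ when $a>3$ is exactly the verification the paper skips. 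The payoff of your approach is a cleaner, case-free derivation with nothing left to the reader; the paper's eigenvalue computation, on the other hand, is what feeds directly into the Neimark--Sacker analysis (Theorem~\ref{ho1}) that follows, so their choice is natural in context.
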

\begin{proof} See the proof in Appendix \ref{aa}.
\end{proof}

The stability domains are presented in Fig. \ref{fig1} (b) over the boundedness domain (Fig. \ref{fig1} (a)). Color red represents the stability domain of the fixed point $X_1^*$, blue for the fixed point $X_2^*$ and green for $X_3^*$. On the remaining domain, fixed points are unstable.

\subsection{Quasiperiodic regime}
Because the quasiperiodic windows are prevalent, the $N-S$ bifurcation, which generates quasiperiodicity, is now studied both analytically and numerically.
Denote its complex eigenvalues by $e_{1,2}^c$ (see Appendix \ref{aa}). The $N-S$ bifurcation occurs when the fixed point $X_3^*$ changes stability, for which the following conditions are satisfied \cite{guk}\footnote{The formula for the first Lyapunov coefficient (genericity condition), which indicates the type of $N-S$ bifurcation, is not considered here.}:

(i) $|e_{1,2}^c(a,d)|=1$;

(ii) $\frac{\partial}{\partial d}(|e_{1,2}^c(a,d)|)=k\neq0$ (transversality condition);

(iii) $(e_{1,2}^c(a,d))^j\neq1$ for $j=1,2,3,4$.


The stability of $X_3^*$ changes as $a$ and $d$ are varied. Therefore, the following theorem can be established
\begin{theorem}\label{ho1}
For
\begin{equation}\label{ns}
a=\frac{d}{d-2},
\end{equation}
with $d>2$, the system \eqref{eq1} undergoes a $N-S$ bifurcation.

\end{theorem}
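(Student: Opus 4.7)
The plan is to verify the three Neimark--Sacker conditions directly from the linearization of the map at $X_3^*$. First I would form the Jacobian
\begin{equation*}
J(X_3^*)=\begin{pmatrix} a-2ax-by & -bx\\ dy & dx \end{pmatrix}\Bigg|_{X_3^*},
\end{equation*}
and simplify using $dx=1$ and $by=(a(d-1)-d)/d$. A short calculation yields a trace $T(a,d)=(2d-a)/d$ and a determinant $D(a,d)=a(d-2)/d$. Notice that $b$ drops out of both, which is consistent with the earlier remark that $b$ does not affect the dynamics. The characteristic polynomial is then $\mu^2-T\mu+D=0$, and the eigenvalues are complex conjugate precisely when $T^2-4D<0$, in which case $|e_{1,2}^c|^2=D$.

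For condition (i), I would impose $|e_{1,2}^c|=1$, i.e.\ $D=1$, and observe that this reads $a(d-2)=d$, which under $d>2$ is equivalent to the stated relation $a=d/(d-2)$. Substituting this back I would obtain $T=(2d-5)/(d-2)$ and verify that $|T|<2$ on the intended parameter range, so the eigenvalues are genuinely a complex conjugate pair on the unit circle.

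Condition (ii) is immediate: along the bifurcation curve one has $\partial D/\partial d=2a/d^2>0$, so $\partial|e_{1,2}^c|/\partial d=(1/(2|e_{1,2}^c|))\,\partial D/\partial d=a/d^2\neq 0$, which gives the required transversal crossing. For condition (iii), since $D=1$ on the curve the eigenvalues can be written as $e^{\pm i\theta}$ with $\cos\theta=T/2=(2d-5)/(2(d-2))$; the strong resonances $(e_{1,2}^c)^j=1$ for $j=1,2,3,4$ correspond to $\cos\theta\in\{1,-1,-1/2,0\}$. Solving each of these equalities for $d$ gives at most finitely many isolated exceptional values in $(2,\infty)$, which can be excluded from the statement and do not obstruct the bifurcation on an open subset of the curve.

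The main obstacle I expect is the book-keeping around condition (iii): one needs to confirm that the discriminant $T^2-4D$ remains strictly negative on the segment of the curve of interest (so the pair does not collide on the real axis) and that the resonant values of $d$ are genuinely isolated, so that the generic Neimark--Sacker scenario applies; the footnote already signals that the first Lyapunov coefficient (criticality) is not treated here, so no normal-form computation is required. The remaining algebra (Jacobian entries, trace, determinant, transversality derivative) is routine once the fixed point is substituted correctly.
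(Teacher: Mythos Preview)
Your approach is correct and is essentially the paper's own: both verify the three Neimark--Sacker conditions directly from the linearization at $X_3^*$, using the same modulus/determinant identity $|e_{1,2}^c|^2=a(d-2)/d$ and the same transversality derivative $\partial|e_{1,2}^c|/\partial d=a/d^2>0$ on the curve. The only place the paper is more explicit is condition~(iii): it actually solves $(e_{1,2}^c)^j=1$ on the curve to obtain the resonant values $a\in\{1,5,7,9\}$ (equivalently $d\in\{\infty,5/2,7/3,9/4\}$) and then uses the restriction $a\in(1,4]$ to conclude non-resonance on the entire curve segment of interest, whereas you leave these as unspecified isolated exceptions.
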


\begin{proof} See Appendix \ref{bb}.
\end{proof}

In the parametric plane $(a,d)$, the relation \eqref{ns} represents the $N-S$ curve, denoted as $AB$ in Fig. \ref{fig1} (c) (red plot). Thus, $N-S$ bifurcations appear at points $NS(a,d)$ situated along the curve $AB$, for $a\in(1.41,4)$ and $d\in(2.67,6.88)$.

Hereafter, the parameter $d$ is set to $d=3.5$ (see the dotted line in Figs. \ref{fig1} (b) and (c)). According to the relation \eqref{ns}, the bifurcation occurs at the point $NS$ in Fig. \ref{fig00}, with $a=a_{NS}=2.3333...$, in perfect agreement with the numerical results.

The $N-S$ bifurcation is supercritical because, at the fixed point $X_3^*$, it loses its stability, giving birth to a closed invariant curve (topologically equivalent to a circle).
Once $a$ crosses the point $NS$ along the line $d=3.5$, the system passes from the stable fixed point $X_3^*$ to an unstable fixed point, and an one-frequency invariant closed curve is born, whose size in the phase plane grows as the parameter $a$ increases. To understand better this phenomenon, consider Fig. \ref{sapte}. Fig. \ref{sapte} (a) represents the phase portraits, while Fig. \ref{sapte} (b), the PSD. In Figs. \ref{sapte} (i), the case of $a=2.33<a_{NS}$ (before bifurcation) is considered: the phase portrait shows that the orbits are attracted by $X_3^*$. The PSD indicates that there are no frequencies yet. In Figs. \ref{sapte} (ii), the case of a slightly bigger value of $a$, $a=2.3334>a_{NS}$, is considered. The three-dimensional simulative result in the space $(a,x,y)$ shows that $X_3^*$ loses its stability and the orbits from outside (and inside) are the born invariant circle $\Gamma$, which are attracted by the quasiperiodic orbit. PSD reveals the birth of the first fundamental frequency (first harmonic)\footnote{It is known that a physical mark of the $N-S$ bifurcation is the apparition of a new additional frequency in the time series.} $f_0\approx0.1339$ and the place of the first harmonic $f_1$.
Fig. \ref{sapte} (iii) presents the case of $a=2.3545$, at a relatively larger distance from $a_{NS}$. For this value of $a$, the size of $\Gamma$ still increases and other two harmonics (modulation frequencies) of $f_0\approx0.1343$ are born, being distributed in the PSD as follows: $f_k=f_0+k\delta_1=(k+1)f_0$, $k=1,2$, equidistant at a constant offset of $\delta_1\approx0.1343$ to each other. Also, the second main frequency $f_{21}\approx0.4628$ appears, to the right of $f_2$, at the distance of $\delta_2\approx0.0599$.
The quasiperiodic oscillations are non-resonant, $f_0/f_{21}$, which is an irrational number.


If one considers the complex eigenvalues $e_{1,2}^c(a,d)$ of $X_3^*$ for $d>a/(a-1)$ in the exponential form: $e_1^c(a,b)=A(a,d)e^{i\alpha}$ (see Appendix \ref{aa}), and $e_2^c(a,b)=\overline{e_1^c}=A(a,d)e^{-i\alpha}$, then the argument $\alpha$ can be obtained via the relation

\[
\alpha=\arctan\frac{Im(e_{1}^c)}{Re(e_{1}^c)}=\arctan \bigg(\frac{1}{2}\frac{\sqrt{4(a-1)d^2-4ad-a^2}}{2d-a}\bigg).
\]
Calculations show that a simpler form of $\alpha$, determined at the bifurcation points $d^*=2a/(a-1)$ and situated on the curve $N-S$, can be expressed as
\begin{equation}\label{alfa}
\alpha(d^*)=\arccos\frac{5-a}{4}.
\end{equation}
\indent i) If $\alpha(d^*)/2\pi$ is a rational number, i.e.,

\begin{equation}\label{mode}
\alpha(d^*)/2\pi=m/n,
\end{equation}
with $m,n$ being some positive relative-prime integers, then one has a \emph{periodic regime}, \emph{mode-locked state}, or \emph{stable periodic orbit}, and the iterated points repeat (after relatively long transients). In this case $n$ is the period of the orbit, while $m$ represents its multiplicity.

ii) If $\alpha(d^*)/2\pi$ is an irrational number, the orbit points tend to fill an invariant (dense) closed orbit, which never repeats itself. The orbit is called \emph{quasiperiodic}, or an \emph{invariant circle}, though the orbit may never exactly repeat itself, and the motion remains \emph{regular}.

Because the sudden appearance or disappearances of certain dynamics in the system \eqref{eq1} (such as transitions between quasiperiodicity, mode-locking, chaos and hyperchaos) as the parameter $a$ is varied, this resembles the crisis in strange chaotic attractors (as introduced by Grebogi et al. \cite{cris}). Similarly, this phenomenon will be called \emph{crisis}.

At point $P_1(2.718)$, after the $N-S$ bifurcation, the quasiperiodicity continues till a major periodic window appears, which starts at $a=a_{P_1}$, and the quasiperiodic attractor is suddenly destroyed, allowing the formation of a stable window. In Fig. \ref{opt}, few cases around $a=a_{P_1}$ are presented. Fig. \ref{opt} (a) represents the bifurcation diagram embedding the periodic widow and Fig. \ref{opt} (b) presents LEs which, within the periodic window, are both negative. A boundary crisis opens a window, when the quasiperiodic attractor is suddenly destroyed in favor of the stable window, and a reverse crisis appears at the end of the stable window, when the stable period-7 orbit suddenly disappears making room for the new quasiperiodic attractor.
 To underline the differences in the system dynamics around this point, Figs. \ref{opt} (i), (ii) present two representative orbits: $a=2.6$ (quasiperiodic orbit) situated at about the middle of the interval $(a_H,a_{P_1})$, and $a=2.72$, slightly after $P_1$ (stable periodic orbit), respectively. For each case, Figs. \ref{opt} (c)-(e) present the phase portraits, partial time series, and PSD. The quasi-perioic orbit in Fig. \ref{opt} c(i) contains 7 points (all unstable at this value of $a$), plotted with filled circles.
  For the quasiperiodic orbit, the time series in Fig. \ref{opt} d(i) reveals that the orbit never repeats.
Fig. \ref{opt} e(i) shows the two main frequencies, $f_0\approx0.1398$ and $f_{21}\approx0.4409$, and their harmonics (offsets are $\delta_1\approx0.1398$ and $\delta_2\approx0.0215$).
The two main frequencies (and their harmonics) are incommensurate, so the oscillations are quasiperiodic and non-resonant. Fig. \ref{opt} c(ii) plotts the stable period-7 orbit. In this case, points position is slightly different from the case of $a=2.6$, because of the slight increment of the parameter $a$. Now, only the frequencies $f_{0,1,2}$ are clearly visible. The second frequencies $f_{k1}$, $k=0,1,2$, collide with the first frequency and its harmonics leading to the disappearance of the quasiperiodic orbit. The obtained oscillations are resonant (mode-locking).

Point $P_2(3.1)$ presents similar characteristics with $P_1$ and, therefore, is omitted.

For $a\in(a_{NS},a_{P_3})$, with $a_{P_3}=3.194$, the quasiperiodicity undergoes crisis at points $P_1$ and $P_3$. Thus, at these points, the maximal LE, $\lambda_1$, leaves suddenly the zero value, becoming negative, and opens periodic windows for small parameter ranges.

Note that, for $a\in[2,4]$ and $d=3.5$, only two independent frequencies have been found. Also, the invariant orbits suggest that the discrete predator-prey system follows dynamical
behaviors that are homogeneous in space and quasiperiodically oscillating in time \cite{muica}.

\subsection{Finding stable periodic orbits and the periods}\label{maimulte}
{i) \emph{Stable cycle}.

1. Suppose one intends to obtain a stable periodic orbit, e.g. of period 7. From the relations \eqref{alfa}-\eqref{mode}, with $m=1$ and $n=7$, one obtains $a=2.507040792$ from $\arccos~((5-a)/4)-2\pi/7)=0$ and $d=3.327985277$ from \eqref{ns}. This new $N-S$ bifurcation point, $NS_1$, can be viewed in Fig. \ref{fig1} (c). Figs. \ref{cica_patru} (i) present the phase portraits of two values of $(a,d)$ slightly near $(a,d)_{NS_1}$.

2. To obtain a stable orbit of period 6 ($m=1$ and $n=6$) from the equation $\arccos~((5-a)/4)-2\pi/6)=0$, one has $a=3$ and the corresponding $d=3$ (point $NS_2$ in Fig. \ref{fig1} (c)). Figs. \ref{cica_patru} (ii) show two values of $(a,d)$, slightly near $(a,d)_{NS_2}$.

Note that, at all $N-S$ bifurcation points, both LEs are zero.

ii) \emph{Period}. To find the period of a stable orbit (which implicitly has a single main frequency) with a relatively small error (order of $1e-3$), the PSD can be used as follows where the period of a stable orbit, $T$, is defined as $T=1/f$, which is the fundamental frequency.

1. For example, to find the period of the stable orbit for $a=2.72$ (see Figs. \ref{opt} (ii)), from the PSD one obtains $f_0\approx0.1429$ and, therefore, $T=1/f_0\approx6.9979\approx 7$, as obtained numerically. Similarly, the harmonics $f_{1,2}$ indicate the repetitions after $14$ steps and $21$ steps.

2. For $a=a_{P_5}=3.36$ (see Fig. \ref{zece} (a)), $f_0\approx0.1667$ and $T=1/f_0\approx5.9988\approx6$.

3. Consider $a=3.575$ (Figs. \ref{unspe} (i)). Denoting the first frequency by $f_\text{I}\approx0.0714$, one can see the previous second frequency and its harmonics moved in the frequency space which, together with the first frequency and its harmonics, form a new single frequency series peeks, $kf_\text{I}$, $k=1,2,3...,7$, with offset of $\delta_3\approx0.0357$ (Fig. \ref{unspe} c(i)). Therefore, $T=1/f_\text{I}\approx13.9997\approx14$. In this case, the second main frequency and its harmonics suggest a bifurcation, when the second main frequency (and its harmonics) are born as half of the main frequency (and its harmonics).

\subsection{Chaotic regime}
Point $P_3(3.194)$ represents the center of an extremely narrow window, which enhances the first chaotic behavior. The window begins with a boundary crisis, where the quasiperiodic oscillations suddenly disappear and a stable periodic orbit with an extremely large period (see the periodic window in Fig. \ref{noua} (a), yellow plot), and ends with a reverse crisis, where the stable periodic orbit vanishes and chaotic oscillations appear. Fig. \ref{noua} (b) presents the phase portrait of a quasiperiodic orbit for $a=3.193<a_{P_3}$. The zero maximal LE and the fact that the orbit is continuously filled, never repeats, indicates that the orbit is quasiperiodic (see also the Lyapunov spectrum $\Lambda$ in Fig. \ref{noua} (c)). For a slightly larger value of $a=3.1945>a_{P_3}$, the system behaves chaotically (see Fig. \ref{noua} (d)). The transient (dotted plot) of the quasiperiodic attractor for $a=3.1999$ (Fig. \ref{noua} (e)) indicates the transformation (crisis) of the previous chaotic attractor in Fig. \ref{noua} (d), and then the chaotic orbit is broken into 45 quasiperiodic island-like sets. The period-45 small amplitude invariant circles, which compose the attractor, surrounding unstable points. In Fig. \ref{noua} (f), another chaotic attractor, in a different shape, appears for $a=3.3<a_{P_4}$. Since $\sum \lambda_i=0.065-0.018>0$, this attractor is K--Y chaotic. Note that, for $a=3.24533$, the system evolves along a stable multi-period orbit (similar to the orbit in Fig. \ref{zece} (a), but with a higher period). Right after this value, $a=3.3$ (Fig. \ref{noua} (f)), the system has chaotic behavior.

In the window located between the points $P_4(3.325)$ and $P_9(3.425)$, beside the above-mentioned bistability, the system presents a plethora of attractors (Figs. \ref{zece}): stable periodic orbits (point $P_5(3.36)$, Fig. \ref{zece} (a)), quasiperiodic orbits (point $P_6(3.368)$, Fig. \ref{zece} (b)), chaotic orbits (point $P_7(3.39)$, Fig. \ref{zece} (c)) and hyperchaotic orbits (point $P_8(3.394)$, Fig. \ref{zece} (d)). After a chaos doubling process, the previous chaotic attractor shown in Fig. \ref{zece} (c) gives birth to 12 small chaotic attractors composing the hyperchaotic attractor. The attractor is considered ``weak'' because $\lambda_2$ is slightly larger than 0 ($\lambda_2\approx3e-3$). This window (between $P_4(3.325)$ and $P_9(3.425)$) begins with a stable period-6 orbit, which was born after a sudden extinction of the chaotic orbit at $P_4$, indicating a boundary crisis, and it ends with the chaotic orbit at $P_9(3.425)$. The transients, shown for clarity only in Fig. \ref{zece} (a), are chaotic, resembling the shape of the chaotic orbit before $P_4$ (see the chaotic orbit for $a=3.3<a_{P_4}$ in Fig. \ref{noua} (e)).

The parameter interval $(P_9(3.425),P_{10}(3.57))$ corresponds to a stable period-7 window (see Fig. \ref{fig00}).

The window $(P_{10}(3.57),P_{13}(3.58))$ is predominantly periodic beginning with a boundary crisis, which generates a stable period-7 orbit (see Fig. \ref{fig00} (e) and the detail $D$, which indicates two nearby double branches). Note that, for $a\in (a_{P_{10}},a_{P_{13}})$, $\lambda_1<0$, and at $P_{11}(3.571)$, its modulus is apparently small (of order $1e-3$) but is still negative. Therefore, there exists a stable periodic orbit with a single main frequency $f_I$ and 6 other harmonics, $kf_I=(k-1)\delta_3$, $k=2,3,...,7$, with $\delta_3\approx0.0357$ being the offset between two consecutive peeks (see Figs. \ref{unspe} (i)). Actually, the second main frequency and its harmonics still exist, but they are commensurate with the first frequencies. Between $P_{11}(3.571)$ and $P_{12}(3.578)$, for $a=3.575$, because of a period-doubling process, one obtains a stable period-14 orbit (see also Subsection \ref{maimulte} ii).
Between $P_{12}(3.578)$ and $P_{13}(3.58)$, the system presents quasiperiodic oscillations (see Figs. \ref{unspe} (ii), where the case of $a=3.579$ is considered). Because of the trifurcation of the first main frequency \cite{kapa} and of its harmonics (two new peeks at the left and right sides, Fig. \ref{unspe} c(ii), red plot), the second frequency is born. Therefore, the stability of the previous period-14 orbit is destroyed and 14 invariant small-amplitude circles surround the period-14 points, which are now unstable. The difference between the period-14 stable orbit in Figs. \ref{unspe} (i) and the quasiperiodic orbit in Figs. \ref{unspe} (ii) is clearly unveiled beside the PSD, by the partial time series (Figs. \ref{unspe} b(i), b(ii)).
Next, via a reverse boundary crisis, at $P_{13}$, the quasiperiodic orbits transform into chaotic orbits. For $a=3.581$, a K--Y chaotic attractor is shown in Fig. \ref{unspe} (iii).
The several discrete peaks of the quasiperiodic orbit in Fig. \ref{unspe} c(ii) transform now into a broadband (Fig. \ref{unspe} c(iii)), where the former frequencies $f_I$ and its harmonics still can be seen.

The window between $P_{13}(3.58)$ and $P_{14}(3.965)$ (see the detail in Fig. \ref{fig00} (e)), starts with an interior crisis, when the quasiperiodicity suddenly vanishes, making room for to hyperchaotic windows, which alternate with narrow quasiperiodic windows. The shapes of the hyperchaotic attractors are similar with the shapes of the chaotic attractors, for $a<a_{P_{13}}$ (see e.g. Fig. \ref{unspe} c(iii)). Therefore, they are not presented here. A quasiperiodic orbit for $a$ within a small neighborhood of $3.5826$ is presented in Fig. \ref{doispe}. The zoomed bifurcation diagram in Fig. \ref{doispe} (a), for $a\in[3.5824,3.5829]$, reveals the fact that within the first narrow window there exist quasiperiodic orbits (see the horizontal bifurcation branches with a relative thickness indicating the quasiperiodicity). After that, in the next window, there appear stable orbits, similarly with the case in Fig. \ref{figg} (a). Fig. \ref{doispe} (b) indicates 14 spots (numbered in an aleatory order), which reveal quasiperiodic islands-like sets.

The last window $[P_{14}(3.965), P_{17}(3.9999)]$ (see the detail in Fig. \ref{fig00} (f)) is a hyperchaotic one, which begins with an interior crisis after a window composed by a stable period-4 window followed by a quasiperiodic one: $[P_{14}(3.965),P_{15}(3.968)]$. At $P_{16}(3.975)$, through an attractor merging crisis, the size of the hyperchaotic attractor suddenly increases. Such a hyperchatic attractor is presented in Fig. \ref{treispe} at $P_{16}(3.975)$.

At $a=a_{P_{17}}=3.9999$, the hyperchaotic window ends and a narrow chaotic window starts.

\section{Sustaining non-extinction of system dynamics}

\subsection{An unusual chaotic attractor}

Previously it was assumed that, in the absence of prey, the predators become extinct in one generation (see e.g. \cite{lafel}). It is now found numerically that, for some parameter values, $y_n$ might vanish while $x_n$ remains chaotic. Actually, for $a=4$ and $d=3.5$, the system behavior is K--Y chaotic (Fig. \ref{paispe}).
Counterintuitively, after a long chaotic transient (of length $n^*(x_0,y_0)$ of generally thousands or dozens of thousands of iterations, depending on initial conditions $(x_0,y_0)$), the component $y_n$ vanishes, and is trapped by the line $y=0$. The component $x_n$ remains chaotic and fills the line $x\in[0,1]$ (see Figs. \ref{paispe} (a), (b)), with chaotic character becoming stronger (see Fig. \ref{paispe} (d)). The chaotic character is verified here with phase portrait, time series and the '0-1' test. For, $n>n^*$, the attractor is considered a chaotic attractor embedded in the set $[0,1]\times \{0\}$. Denote this chaotic attractor by $\mathcal{A}$.
For $n<n^*$, the shape of $\mathcal{A}$ resembles the shape of the hyperchaotic attractors for $a\in[P_{14},P_{17}]$ (see Fig. \ref {treispe}). Actually, for $n<n^*$, $\mathcal{A}$ is hyperchaotic, while for $n>n^*$, $\mathcal{A}$ becomes K--Y chaotic. The attraction basin of $\mathcal{A}$ is drawn in Fig. \ref{paispe} (c).
Numerically, the points on the basin boundary are disposed along the line $y=-20x+20$. Moreover, it can be proved analytically\footnote{For simplicity, the proof is not presented here.} that the orbits of \eqref{eq1} are located in the domain of $a\ge ax+by$. If $a\ge d$, the last inequality can be easily improved: $ad/4\ge dx+by$. Therefore,  if $a=4,d=3.5$, one gets $20\ge 20x+y$, the line at the attraction basin of $\mathcal{A}$. Moreover, for all attractors with $a>d$, such as $\mathcal{A}$, one obtains a relation for the upper-right boundary: $y=-17.5x+17.5$ (Fig. \ref{paispe} (a)). The boundary line can be found similarly for all attractors with $d<a$.

\subsection{Anticontrol of the chaotic attractor $\mathcal{\mathcal{A}}$ }
Under certain circumstances, for the case of $\mathcal{A}$, when $y_n$ becomes zero, non-chaotic (i.e. regular non-zero or quasiperiodic) or even chaotic behavior can be desirable. Therefore, if one wants to prevent population $y_n$ from extinction, the following anticontrol (chaoticization)\footnote{\emph{Anticontrol}, or \emph{chaoticization}, represents a concept that one can make a given system chaotic or enhance the existing chaos of a chaotic system (see e.g. \cite{chen}).} algorithm can be used: one or both variables, are modified as follows

\begin{equation}\label{antica}
\left\{\begin{array}{l}
x_n=(1-\gamma_1)x_n~~\text {if}~~ \mod (n,\Delta_1)=0\\
y_n=(1-\gamma_2)y_n~~\text {if}~~ \mod (n,\Delta_2)=0,~~n=0,1,2,...
\end{array}\right .
\end{equation}
with $\gamma_{1,2}$ being some small positive real parameters.

Recall that an important aspect of control theory is to ``force'' the system dynamics to any arbitrary targeted regular behavior. Contrarily, here anticontrol is used to enhance chaos of the system, so that it escapes from the ``trap'' $y=0$, without change the dynamics of the variable $x$.

\begin{remark}
Since the perturbations in \eqref{antica} are periodic: $x_n\rightarrow x_n(1-\gamma_1)$ at every $\Delta_1$ steps, and $y_n\rightarrow y_n(1-\gamma_2)$ at every $\Delta_2$ steps, after some $\Delta$ steps, chosen as the least common multiple of $\Delta_1$ and $\Delta_2$, one gets back to the initial stage, so the impulses are $\Delta$-periodic. Therefore, the anticontrolled system represents a discrete system. For example, with $\Delta_1=2$ and $\Delta_2=3$, one has $\Delta=6$. So, after 6 iterations, the procedure is repeating. This means a map $F: (x_0,y_0)\rightarrow(x_6,y_6)$, which determines the dynamics of the system where the $n$-periodic orbits of $F$ give $6n$-periodic orbits of the anticontrolled system. In general, $F : (x_0,y_0)\rightarrow(x_\Delta,y_\Delta)$ represents the discrete version of the anticontrolled system.
\end{remark}

Rather than chaotic dynamics, regular motions can also be obtained (when the algorithm is a chaos control algorithm).

The algorithm is called here the \emph{anticontrol} algorithm, which was introduced by G\"{u}\'{e}mez and Mat\`{\i}as in \cite{gm2} (see also \cite{gm1}), but to control chaotic behavior in discrete and continuous systems. In this paper, the algorithm is used to anticontrol the variable $y_n$ only (see also \cite{in}).

Now consider the simplest case, easily to implement numerically, when only the variable $y_n$ is perturbed at each iteration, i.e. $\Delta_2=1$

\[
y(n)\rightarrow(1-\gamma)y(n),~~~n=0,1,2,...
\]

\noindent The perturbed system has the following initial value problem:
\begin{equation}\label{e1}\left\{\begin{array}{l}
x_{n+1}=ax_n(1-x_n)-b(1-\gamma)x_ny_n,~~~ (x_0,y_0)=(0.5,1.2), \\
y_{n+1}=d(1-\gamma)x_ny_n,
\end{array}\right .
\end{equation}
with $n\in\N$ and nonnegative constants $a,b,d$, and $\gamma$ being a small positive parameter.

The fixed points are

$$
\bar{X}_1^*=(0,0),\quad \bar{X}_2^*=\left(1-\frac{1}{a},0\right) \quad \text{and}\quad \bar{X}_3^*=\left(\frac{1}{d(1-\gamma)},\frac{ad(1-\gamma)-a-d(1-\gamma)}{bd(1-\gamma)^2}\right).
$$
The stability of the fixed points $\bar{X}_{1,2,3}^*$, is analyzed similarly to the case of the fixed points $X_{1,2,3}^*$.

\noindent $\bar{X}_2^*$ is stable if and only if
\[a\in(1,3),\quad d(1-\gamma)<a/{a-1},
\]
and $\bar{X}_3^*$ is stable if and only if
$$\max\bigg \{\frac{3a}{3+a},\frac{a}{a-1}\bigg\}<d(1-\gamma)<\frac{2a}{a-1}.$$
To find the values of $\gamma$ for which the anticontrol algorithm enhances either regular or chaotic motion, the bifurcation diagram, with $\gamma$ as the bifurcation parameter, provides a useful tool.

Next, the dynamics of the anticontrolled system \eqref{e1} is analyzed numerically for $\gamma\in[0.05,0.35]$, one of the most interesting ranges for $\gamma$.

The bifurcation diagram in Fig. \ref{antictrl} shows a reverse scenario compared to the direct bifurcation scenario of uncontrolled system \eqref{eq1} with $a$ as the bifurcation parameter. Let $P_1',P_2',...$, and $SN'$ be some of the most important points in the parameter space. Several reverse cascades of period-doubling bifurcation take place for $\gamma$ between $(P_4',P_7')$. The spectrum of LEs, $\Lambda$, and the $K$ value of the '0-1' test, are used to check the results of the anticontrol algorithm.

Between $(P_1',P_3')=(0.05,0.104)$, the anticontrolled system is hyperchaotic (see the case of $\gamma=0.075$ in Fig. \ref{saispe} (a)). The hyperchaotic window contains the stable window located at $P_2'$ with $\gamma_{P_2'}\approx 0.0958$ (see Fig. \ref{saispe} (b), where for $\gamma=0.0958$ a stable period-20 orbit is generated after a long hyperchaotic transient, which resembles the former hyperchaotic attractor). A chaotic window, containing several narrow periodic windows, begins right after the hyperchaotic one at $P_3'(0.104)$ and ends at $P_5'(0.1128)$. A representative case is obtained for $\gamma=0.107$ (see Fig. \ref{saispe} (c)). Note that at $P_4'(1086)$ an abruptly transition to chaos appears, via (presumably) attractor-merging crisis, without changes in the attractor shapes. For $\gamma\in(P_5'(0.1128),P_8'(0.1771))$, the anticontrolled system has regular orbits (see Fig. \ref{saispe} (d), where a stable period-10 orbit is obtained for $\gamma=0.125$). For $\gamma$ decreasing from $P_8'(0.1771)$, a reverse cascade of period-doubling bifurcation starts and continues, with a period-doubling bifurcation at $P_7'(0.1369)$. Then, at $P_6'(0.1194)$, and so on, it continues till to $P_5'(0.1128)$ where the stable window meets, via a reverse boundary crisis, the chaotic window $(P_3'(0.104), P_5'(0.1128))$. Between $P_8'(0.1771)$ and $NS'$, the system is quasiperiodic. The point $SN'$ is the point where the anticontrolled system undergoes the $N-S$ bifurcation, for $\gamma$ given analytically by Theorem \ref{hopf2}. The large window $(P_8'(0.1771),NS')$ contains several narrow periodic windows. A quasiperiodic orbit, with $\gamma=0.1771$, is presented in Fig. \ref{saispe} (e).
  Note that this attractor resembles the two-tori of two linearly coupled logistic maps (at a larger scale) \cite{alta}. Numerically, the obtained $N-S$ bifurcation parameter value $\gamma=0.2381$ corresponds to the analytical value given by the following $N-S$ theorem for the anticontrolled system \eqref{e1}.

\begin{theorem}\label{hopf2}
For
\[
\gamma=\frac{a d-2 a-d}{(a-1) d},
\]
the fixed point $\bar{X}_3^*$ of the anticontrolled system \eqref{e1} undergoes an $N-S$ bifurcation.

\end{theorem}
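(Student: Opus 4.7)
The plan is to reduce Theorem \ref{hopf2} to Theorem \ref{ho1} by a parameter rescaling. The key observation is that with $\tilde d := d(1-\gamma)$ and $\tilde b := b(1-\gamma)$, the anticontrolled system \eqref{e1} reads
\[
x_{n+1} = a x_n(1-x_n) - \tilde b\, x_n y_n, \qquad y_{n+1} = \tilde d\, x_n y_n,
\]
which is precisely \eqref{eq1} with $(b,d)$ replaced by $(\tilde b, \tilde d)$. Since the paper has already noted that $b$ does not influence the dynamics (the linear rescaling $y\mapsto y/b$ removes it from \eqref{eq1} entirely), the bifurcation structure is governed solely by $\tilde d$.

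Applying Theorem \ref{ho1} to this rescaled system, the fixed point $\bar X_3^*$ loses stability through an $N$-$S$ bifurcation exactly when $a = \tilde d/(\tilde d - 2)$, equivalently $\tilde d = 2a/(a-1)$. Substituting $\tilde d = d(1-\gamma)$ and solving for $\gamma$ then yields
\[
\gamma = 1 - \frac{2a}{(a-1)d} = \frac{ad - 2a - d}{(a-1)d},
\]
which is the formula in the statement.

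The three genericity conditions transfer without new work: $|e_{1,2}^c(a,\tilde d)|=1$ is inherited from Theorem \ref{ho1}; the transversality condition with respect to $\gamma$ follows from $\partial\tilde d/\partial\gamma = -d\neq 0$ together with $\partial_{\tilde d}|e_{1,2}^c|\neq 0$ on the $N$-$S$ curve, via the chain rule; and the non-resonance condition $(e_{1,2}^c)^j\neq 1$ for $j=1,2,3,4$ reduces through \eqref{alfa} to $\alpha = \arccos\bigl((5-a)/4\bigr)$, which is independent of $\gamma$, so the condition is automatically satisfied on the same range of $a$ as in Theorem \ref{ho1}.

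The only subtle point to verify is that the simultaneous rescaling of both $b$ and $d$ by the factor $(1-\gamma)$ is genuinely reducible to the single substitution $d\mapsto \tilde d$; this is exactly the content of the paper's earlier observation that $b$ can be eliminated from \eqref{eq1} by $y\mapsto y/b$. Once this algebraic equivalence is confirmed, no new bifurcation-theoretic calculation beyond that already used for Theorem \ref{ho1} is needed, and the main obstacle lies simply in justifying the reduction rather than in producing a new spectral or normal-form computation.
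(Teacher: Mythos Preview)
Your reduction is correct and is a genuinely different route from the paper's own argument. The paper's Appendix~\ref{dd} verifies the three Neimark--Sacker conditions for $\bar X_3^*$ from scratch: it recomputes the modulus condition $A(\gamma)=1$, then differentiates directly with respect to $\gamma$ to obtain the transversality constant $k=-\frac{(a-1)^2 d}{2a}$, and finally writes out the eigenvalues $e_{1,2}^c(\gamma)=\frac{5-a}{4}\pm\frac{\imath}{4}\sqrt{(9-a)(a-1)}$ to check non-resonance for $a\in(1,9)$. Your approach instead recognises that \eqref{e1} is literally \eqref{eq1} with $(b,d)\mapsto(b(1-\gamma),d(1-\gamma))$, and that the paper's own remark that $b$ is dynamically irrelevant (via $y\mapsto by$) collapses this to the single substitution $d\mapsto d(1-\gamma)$; Theorem~\ref{ho1} then carries over verbatim, with transversality in $\gamma$ following from transversality in $d$ by the chain rule and non-resonance inherited because the argument $\arccos\frac{5-a}{4}$ depends only on $a$.

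What each approach buys: the paper's direct computation is self-contained within Appendix~\ref{dd} and yields the explicit sign $k<0$ (consistent with the subcritical character noted in the text); your reduction is shorter, explains \emph{why} the anticontrolled bifurcation value has the form it does, and makes transparent that the $N$--$S$ surface $\mathcal{S}$ in Fig.~\ref{fig1}(d) is simply the graph of $\gamma=1-\tilde d/d$ over the $N$--$S$ curve of the original system---a structural fact that the direct calculation obscures.
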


\begin{proof}See the proof in Appendix \ref{dd}.
\end{proof}

Comparing to the case of the uncontrolled system, the $N-S$ bifurcation is subcritical and also a single one. Thus, for $\gamma>\gamma_{NS'}$, the system behaves regularly and the orbits are attracted by the fixed point $\bar{X}_3^*$ (see the case of $\gamma=0.25$ in Fig. \ref{saispe} (f) where, with $\gamma=0.25$, $\bar{X}_3^*=(0.381,9.841)$).

Summarizing, the anticontrol algorithm can be used successfully to obtain all kinds of motions, from stable periodic orbits to chaotic orbits.

Results about more general case of the anticontrol algorithm, with $\Delta_1=2$, $\Delta_2=3$ and $\gamma_1=\gamma_2$, are presented in Fig. \ref{saptespe}.

\section* {Conclusion and discussion}
In this paper, by extensive numerical calculations and some analysis, extremely rich dynamics of a prey-predator system are revealed. The system presents regular motions (stable periodic orbits and quasiperiodic orbits) and chaotic and hyperchaotic dynamics. Some chaotic attractors are in the sense of Kaplan-Yorke (with positive sum of LEs).

Moreover, with $a=4$, $d=3.5$ and $b=0.2$, a particular chaotic attractor, $\mathcal{A}$, is found for which the $y_n$ component vanishes after a relatively large number of iterations. If this kind of extinction should be prevented, an anticontrol algorithm, used before to control chaos, can be utilized to either enhance chaos (or hyperchaos) existing in the system, or to a reach some regular orbit. To verify the accuracy of the numerical results, beside the analytically studied on the Neimark-Sacker bifurcation, several numerical tools have been applied: time series, phase portraits, power spectral density and the '0-1' test.

The '0-1' test, utilized to unveil the chaotic of the attractor $\mathcal{A}$, indicates, for the first time in the literature according to the authors' knowledge, an interesting behavior of $K$ at the $N-S$ bifurcation point (green circles at points $O$ and $O'$ on Fig. \ref{fig1} and Fig. \ref{antictrl}, respectively). Thus, usually at this $N-S$ point, $K$ should be (or very close to) zero since right before the bifurcation (supercritical bifurcation) or after the bifurcation (subcritical bifurcation), the system orbit is regular with $K$=0. Therefore, this phenomenon deserves further analytical and numerical investigations in the future.

Another interesting discovery, revealed by LEs and the '0-1' test, is that for the uncontrolled system, with $a\in(3.2,3.25)$, the largest LE is nonpositive and $K$ takes intermediate values between 0 and 1. This result indicates the presence of SNAs, which usually appear in systems externally driven by two incommensurate frequencies (see e.g. \cite{prasa}, and references therein, where it is shown how the '0-1' test can be used to detect SNAs). These attractors, ``intermediate'' between strange chaotic attractors and nonchaotic regular dynamics, are geometrically strange because they can be properly described by some fractal dimensions. It is well known that mathematically proving the existence of such attractors is a nontrivial task. Even for a relatively large range of the parameter $\gamma$, the anticontrolled system presents $K$ values intermediate between 0 and 1, because the corresponding LEs are positive (see Fig. \ref{antictrl}) hence there are no SNAs. This suggests that the anticontrol algorithm would destroy the SNAs of the uncontrolled system, which could be useful in practical applications when these kind of attractors are not desirable.

SNAs are generic in quasiperiodically-driven nonlinear systems, which have the largest LE being zero or negative. Therefore, trajectories (i.e. orbits) do not show exponential sensitivity to initial conditions and they are not chaotic (see also \cite{prasa,laka} and related references). Their geometric structure is fractal. Usually, SNAs connect, for a relatively large parameter range, quasiperiodic attractors to chaotic attractors and can be detected using the sign of the largest LE and the $K$ values.

Compared with the classical studies on discrete systems with SNAs (such as the quasiperiodicically forced logistic map), where the existence of these attractors is defined for relatively large parameter ranges, in the non-quasiperiodically driven prey-predator system \eqref{e1}, this phenomenon appears only on very short parameter ranges, or possibly just at some isolated points. Therefore, without a deeper analytical or numerical study, one can only presumably conclude that the system admits SNAs. However, by considering the statements in \cite{sna}\footnote{``We conjecture that, in general, continuous time systems (``flows'') which are not externally driven at two incommensurate frequencies should not be expected to have SNAs except possibly on a set of measure zero in the parameter space.''}, a further study is in order.

 For $a_{SNA}=3.2096852$, within a very small parameter range, the largest LE is negative and $K\thickapprox 0.5$. Figs. \ref{ultim} (a), (b) present the variations of LEs and $K$ within a small interval around $a_{SNA}$. Figs. \ref{ultim} (c), (d) show the plots of $p$ versus $q$ and the mean square displacement $M$ as a function of $n$, which are typical for SNAs (compare with Fig. \ref{apendice} (c), where the SNA of the famous GOPY model introduced by Grebogi et al. in \cite{sna} is considered). There are several other small parameter ranges, where SNAs could appear.

 Note that, with $a=3.3$ (Fig. \ref{noua} (f)), the shape of the attractor in the phase space resembles some SNAs (see e.g. \cite{laka}). Therefore, a subtle investigation for $a$, slightly different from $a=3.3$, could reveal SNAs.

Beside the bifurcations at points $NS$, $NS_1$ and $NS_2$, determined both analytically and numerically (see Subsection \ref{maimulte} (i) and Fig. \ref{fig00} (c)), other presumably $N-S$ bifurcations have been found numerically. At points $SN_3$ and $SN_4$, a subcritical and supercritical $N-S$ bifurcations, respectively, have been found in the window $a\in(3.2,3.215)$, where multilayered branches of quasiperiodic orbits can be seen (Fig. \ref{ultima_sn}). Note that in the middle and at the end of this window, direct and reverse period-doubling bifurcations take place. Also, beside the attractors coexistence shown in Fig. \ref{fig00} (d), three other $N-S$ supercritical bifurcations can be seen (points $NS_5$, $NS_6$ and $NS_7$). If one denotes by $F$ the map associated with system \eqref{e1}, then these dynamics could be interpreted as $N-S$ bifurcations of some fixed points of $F^n$, which generates a family of invariant circles.

As is well known, the $N-S$ bifurcation induces a route to chaos. Thus, via dynamic transition from the fixed point $X_3^*$ to quasiperiodic windows, and to periodic windows of mode-locked orbits, occurring in between, the system becomes chaotic through quasiperiodicity (see e.g. Figs. \ref{noua} (b), (d) and Figs. \ref{unspe} (ii), (iii)). This routh to chaos could also be explained by the birth of new frequencies in the PSD or via the crisis scenario (crisis route).

Another interesting, new and active topic research is, as well known, spatial effects are important for predator-prey systems. Thus, in terrestrial or macroscopic marine predator-prey systems the predators typically disperse more rapidly than their prey \cite{sup3,sup4,sup5} (see also \cite{sup1,sup2}). This fact seems to be strongly related with the vanishing of the predator $y$ in the considered system. Therefore, a future study on how the proposed anticontrol algorithm could be used to other systems, could be useful and important.

\renewcommand{\thesubsection}{\Alph{subsection}}

\counterwithin{equation}{subsection}


\section*{Appendix}

\addtocounter{subsection}{-2}
\subsection{Stability of $X_{1,23}^*$}\label{aa}

Denote by $e$ the eigenvalues. As is well known, a fixed point $X^*$ of a discrete system is stable if and only if its eigenvalues, $e$, satisfy the condition of $|e|<1$.

(i) For $X_1^*$, eigenvalues are $e_1=0$ and $e_2(a)=a$. Therefore, $X_1^*$ is stable for $a<1$.

(ii) For $X_2^*$, eigenvalues are $e_1(a)=2-a$ and $e_2(a,d)=(a-1)d/a$. Therefore, for $a\in(1,3)$ and $d<a/(a-1)$, the fixed point $X_2^*$ is stable.

(iii) For $X_3^*$ with $a>1$, the graphs considered next are obtained by representing $d$ as a function of $a$: $d=d(a)$ (Fig. \ref{fig1} (b)).
Denote $\Delta(a,d)=(a/d+2)^2-4a$ (see the graph \textcircled{\raisebox{-0.9pt}{4}} in Fig. \ref{fig1} (b)).
If $\Delta(a,b)\geq0$ (regions \textcircled{\raisebox{-0.9pt}{3}}, \textcircled{\raisebox{-0.9pt}{5}} and \textcircled{\raisebox{-0.9pt}{8}}, including the blue region), the eigenvalues are real: $e_{1,2}^r(a,d)=\Big(1-\frac{a}{2d}\Big)\pm \frac{1}{2}\sqrt{\Delta(a,d)}$.
If $\Delta(a,d)< 0$ (regions \textcircled{\raisebox{-0.9pt}{2}}, \textcircled{\raisebox{-0.9pt}{1}} and \textcircled{\raisebox{-0.9pt}{6}}), the eigenvalues $e_{1,2}^c$ are complex conjugated, $e_1^c=\overline{e_2^c}$: $e_{1,2}^c(a,d)=\Big(1-\frac{a}{2d}\Big)\pm \imath\frac{1}{2}\sqrt{-\Delta(a,d)}$.
Modulus of the complex (and also of the real) eigenvalue is
\begin{equation}\label{modul}
A(a,d)=\sqrt{a-\frac{2a}{d}}.
\end{equation}
Finally, after some simple calculations, omitted here, the stability parametric domain is found (regions \textcircled{\raisebox{-0.9pt}{2}}, \textcircled{\raisebox{-0.9pt}{3}} and \textcircled{\raisebox{-0.9pt}{4}}, with boarders \textcircled{\raisebox{-0.9pt}{1}} and \textcircled{\raisebox{-0.9pt}{5}} respectively):

$$\max\bigg \{\frac{3a}{3+a},\frac{a}{a-1}\bigg\}<d<\frac{2a}{a-1}.$$
In the remaining regions, \textcircled{\raisebox{-0.9pt}{6}} and \textcircled{\raisebox{-0.9pt}{8}}, fixed points are unstable.

\numberwithin{equation}{subsection}

\subsection{Proof of Theorem \ref{ho1}} \label{bb}

\noindent (i) The bifurcation condition, $|e_{1,2}^c(a,d)|=1$, i.e. $A(a,d)=1$ (see \eqref{modul}), leads to $d-2a/(a-1)=0$;

\noindent (ii) The derivative of $e_{1,2}^c$ with respect to $d$, determined along AB, is
\[\frac{\partial}{\partial d}\Big\rvert e_{1,2}^c(a,d)\Big\rvert_{d=2a/(a-1)}=\frac{a}{d^2\sqrt{a-2a/d}}\Bigg\rvert_{d=2a/(a-1)}=\frac{(a-1)^2}{4a}:=k>0~ \text{for~ all}~ a>1;
\]
\noindent (iii) On the curve AB, the equation $(e_{1,2}^c(a,d))^j=1$, for $d=2a/(a-1)$, has the following solutions: for $j=1$, $a=1$; for $j=2$, $a\in\{1,9\}$; for $j=3$, $a\in\{1,7\}$, and for $j=4$, $a\in\{1,5,9\}$. Therefore, on the considered domain $a\in(1,4]$ and $d=2a/(a-1)$ (see Fig. \ref{fig1} (a)), the condition (iii) is satisfied.

\subsection{Neimark-Sacker (N$-$S) bifurcation of the anticontrolled system}\label{dd}

\noindent (i) $A(\gamma)=1$ if and only if
\begin{equation*}\label{e2b}
\gamma=\frac{a d-2 a-d}{(a-1) d}.
\end{equation*}
The graph of $\gamma$ in the space $(a,d,\gamma)$, $\mathcal{S}$, is the $N-S$ surface (Fig. \ref{fig1} (d)).

Note that the intersection with the plane $\gamma=0$ (Fig. \ref{fig1} (d)) represents the $N-S$ curve for the uncontrolled system \eqref{eq1}, at the fixed point $X_3^*$.

\noindent (ii) Next, one has
$$
\frac{\partial}{\partial \gamma}\Big\rvert e_{1,2}^c(\gamma)\Big\rvert_{\gamma}=-\frac{(a-1)^2 d}{2 a}:=k<0.
$$
\noindent (iii)
To ensure $\gamma\in(0,1)$, suppose $d>2a/(a-1)$. For $a\in(1,9)$, the eigenvalues $e_{1,2}^c(\gamma)$ are: $e_{1,2}^c(\gamma)=\frac{5-a}{4}\pm\imath\frac{1}{4}\sqrt{(9-a)(a-1)}$.
Note that the image of the fixed point $e_{1}^c(\gamma)=\frac{5-a}{4}+\imath\frac{1}{4} \sqrt{(9-a)(a-1)}$ is in the half upper complex-plane, hence (compare with \eqref{alfa})
$$
\arg(e_{1,2}^c(\gamma))=\arccos\frac{5-a}{4}.
$$
Since $\arccos\frac{5-a}{4}$ is increasing from $0$ to $\pi$ for $a\in(1,9)$, the equation $(e_{1,2}(\gamma))^{j}=1$ has only the following solutions: $j=3$, $a=7$, $j=4$, $a=5$.

\subsection{The '0-1' test}\label{ee}

The '0-1'  test, proposed in \cite{01}, is designed to distinguish chaotic behavior from regular behavior in deterministic systems.
Consider a discrete or continuous-time dynamical system and a one-dimensional observable data set $\phi(j)$, $j=1,2,...,N$, of the underlying system, constructed from time series. The 0-1 test has a theorem \cite{012}, which states that a nonchaotic motion is bounded, while a chaotic dynamic behaves like a Brownian motion.

\noindent 1) First, compute the translation variables (for some $c\in(0,\pi)$, \cite{01})
\[
p(n)=\sum_{j=1}^n\phi(j)cos(jc),~~~ q(n)=\sum_{j=1}^n\phi(j)sin(jc),
\]
for $n=1,2,...,N$.

\noindent 2) To determine the growths of $p$ and $q$, the mean-square displacement is determined:
\[
M(n)=\lim_{N\rightarrow \infty}\frac{1}{N}\sum_{j=1}^N[p(j+n)-p(j)]^2+[q(j+n)-q(j)]^2.
\]
where $n\ll N$ (in practice, $n=N/10$ gives good results).

\noindent 3) The asymptotic growth rate is defined as
\[
K=\lim_{n\rightarrow \infty}\log M(n)/\log n.
\]
Because of occurrence of resonances for isolated values of $c$ (where $K$ is larger), the median of the computed values of $K$ is used, since the median is robust against outliers associated with resonances \cite{01}.
If the underlying dynamics is regular (i.e. periodic or quasiperiodic) then $K = 0$; if the underlying dynamics is chaotic then $K = 1$.
Improved variants can be found in \cite{01,014}.

In Fig. \ref{apendice}, the GOPY model are presented. Fig. \ref{apendice} (a) represents the plot of $q$ versus $p$ and in Fig. \ref{apendice} (b) the mean-square displacement $M$ as a function of $n$. Figs. (i) represent the regular orbit of the logistic map $x_{n+1}=rx_n(1-x_n)$ for $r=3.55$; Figs. (ii) present the chaotic orbit of the logistic map for $r=4$, while Figs. (iii) the SNA of the GOPY map $x_{n+1}=2a\tanh(x_n)\cos(2\pi\theta_n)$, $\theta_{n+1}=\theta_n+\omega$, with $a=1.5$, and $\omega=(\sqrt{5}-1)/2$ \cite{sna}.

\textbf{Acknowledgement} This study was partially funded by Russian Science Foundation project 19-41-02002.

Conflict of Interest: The authors declare that they have no conflict of interest.

\newpage{\pagestyle{empty}\cleardoublepage}

\begin{figure}
\begin{center}
\includegraphics[scale=1]{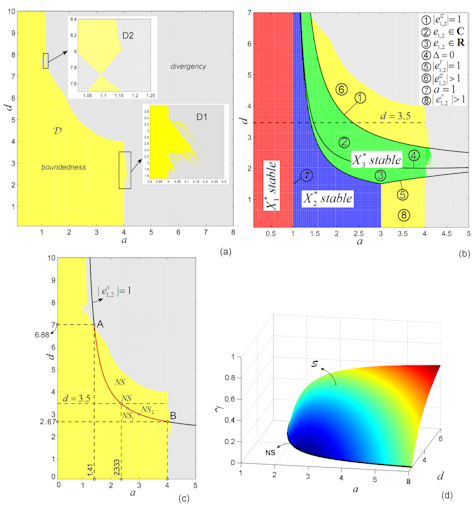}
\caption{(a) Boundedness parametric domain of the system \eqref{eq1}. Yellow presents the boundedness domain and grey the divergency domain. (b) Stability domains of the fixed points $X_{1,2,3}^*$. The stability domain of the fixed point $X_1^*$ is plotted in red, of the fixed point $X_2^*$ in blue, while the fixed point $X_3^*$ in green. Curves \textcircled{\raisebox{-0.9pt}{2}} and \textcircled{\raisebox{-0.9pt}{3}} represent the stability domains of fixed point $X_3^*$ (region \textcircled{\raisebox{-0.9pt}{2}} represents complex eigenvalues, \textcircled{\raisebox{-0.9pt}{3}} real eigenvalues). Regions \textcircled{\raisebox{-0.9pt}{6}} and \textcircled{\raisebox{-0.9pt}{8}} represent the instability domains. (c) The Neimark-Sacker curve $AB$ with $A(1.41,6.88)$ and $B(4,2.67)$, and the $N-S$ bifurcation points $NS$ $a=2.3333$ and $d=3.5$, $NS_1$, with $a=2.507...$, and $b=3.327...$, and $HS_2$ with $a=3$ and $d=3$. (d) The Neimark-Sacker surface of the anticontrolled system \eqref{e1}}
\label{fig1}
\end{center}
\end{figure}

\begin{figure}
\begin{center}
\includegraphics[scale=1]{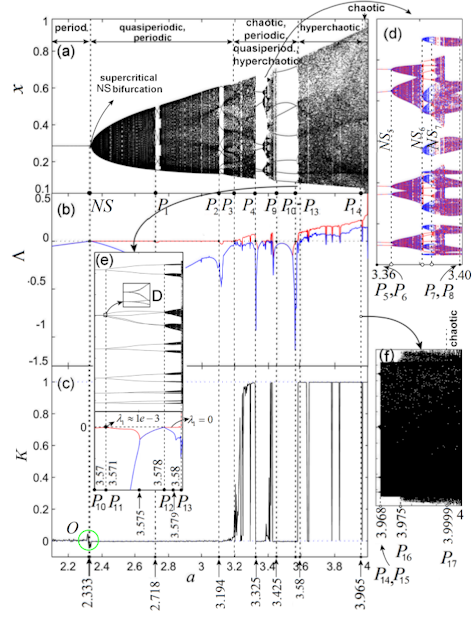}
\caption{(a) Bifurcation diagram of the component $x_n$. (b) Lyapunov spectrum $\Lambda$. (c) Asymptotic growth rate $K$ of the 0-1 test. Detailed zone (d) presents a multistability window in the bifurcation diagram, while (e) presents a zoomed detail of window $(P_6,P_7)$. In detail, (f) presents a zoomed area around the point $P_8$ }
\label{fig00}
\end{center}
\end{figure}

\begin{figure}
\begin{center}
\includegraphics[scale=1]{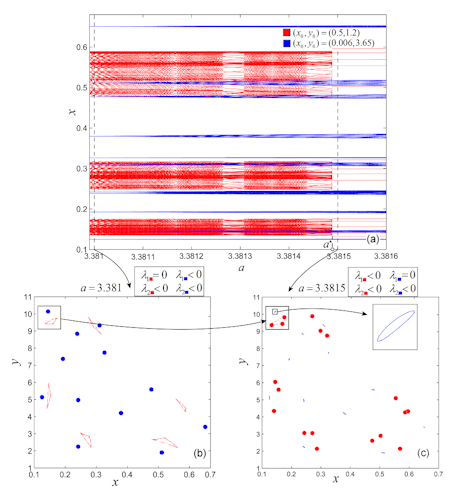}
\caption{Coexisting stable orbits and quasiperiodic orbits for $a\in(3.380,3.3816)$. (a) Bifurcation diagram showing two coexisting attractors generated from two different attraction basins (red and blue, respectively). (b) Phase overplot of the orbits starting from $(x_0,y_0)=(0.5,1.2)$ (red plot) and $(x_0,y_0)=(0.006,3.65)$ (blue plot), for $a=3.381$. The stable period-12 orbit is drawn by filled blue circles, while the quasiperiodic orbit is plotted in red. (c)  Phase overplot of the stable period-18 orbit starting from $(0.5,1.2)$ (red plot) and the 12 islands-like of the quasiperiodic orbit starting from $(0.006,3.65)$ (blue plot), for $a=3.3815$. Rectangular zones underline the transformation from quasiperiodic motion to stable periodic orbit, and vice versa }
\label{figg}
\end{center}
\end{figure}

\begin{figure}
\begin{center}
\includegraphics[scale=1]{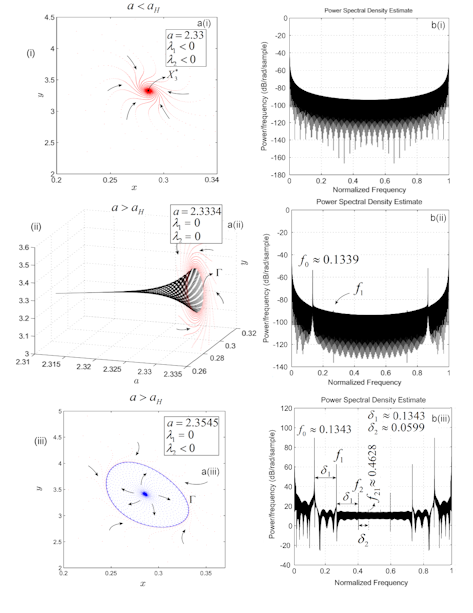}
\caption{Neimark-Sacker bifurcation: Figs. (a) present the phase portraits and Figs. (b), PSD.
(i) $a=2.33$ ($a<a_H$); The phase portrait in Fig. \ref{sapte} a(i) shows that $X_3^*$ is still attractive. In Fig. \ref{sapte} b(i), the PSD reveals that for $a<a_H$ the orbits have no frequencies. (ii)  $a=2.3334$ (slightly after $a_H$): The three-dimensional view shows that orbits from outside of the quasiperiodic curve $\Gamma$ are attracted by the invariant circle born after N$-$S bifurcation. PSD reveals the birth of the frequency $f_0$ and the place of the future harmonic. (iii) $a=2.3545$. Once $a$ is incremented, next harmonics of $f_0$, $f_1$ and $f_2$ are born equidistantly with offset $\delta_1$, and the second main frequency $f_{21}$ appears with distance $\delta_2$ from $f_2$}
\label{sapte}
\end{center}
\end{figure}


\begin{figure}
\begin{center}
\includegraphics[scale=1]{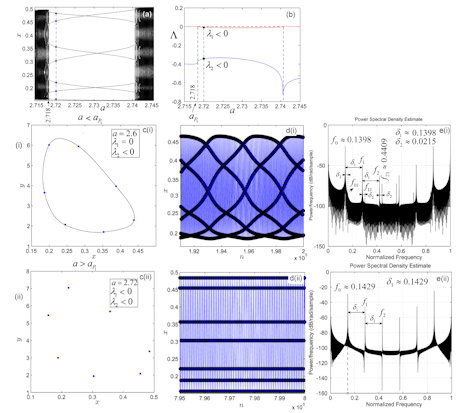}
\caption{Bifurcation point $P_1(2.718)$. (a) Bifurcation diagram for $a\in[2.715,2.745]$, containing the point $P_1$. (b) Lyapunov spectrum for $a\in[2.715,2.745]$.
(i) $a=2.6<a_{P_1}$. (ii) $a=2.72>a_{P_1}$. (c) Phase portraits; (d) Time series; (e) PSD. For $a=2.6$, at about the middle interval $(a_{H},a_{P_1})$, the motion is a quasiperiodic orbit with basic frequency $f_{0}$ and its modulation frequencies (harmonics), multiple of $f_0$, satisfying $f_{k}=f_0+k\delta_1=(k+1)f_0$, $k=1,2$, with offset $\delta_1>0$. Also, the second frequency has harmonics $f_{k1}=f_{k}+\delta_2$, $k=0,1$. The quasiperiodic oscillations are non-resonant and contains 7 unstable fixed points (circled points). For $a=2.72$, the motion became a stable periodic orbit and only frequencies $f_{0,1,2}$ are visible. The oscillations are resonant (mode-locking). The 7 fixed points are stable now and compose a stable period-7 orbit. Point position is slightly different because of the parameter $a$ slight increases}
\label{opt}
\end{center}
\end{figure}

\begin{figure}
\begin{center}
\includegraphics[scale=1]{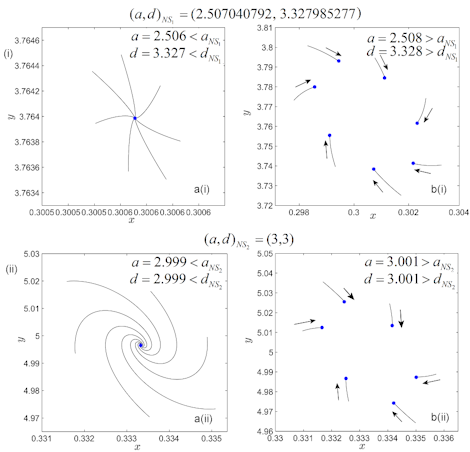}
\caption{Phase portraits of two near points, $NS_1$ and $NS_2$, along the Neimark-Sacker curve $AB$. (i) $(a,d)$ taken near $(a,d)_{NS_1}$. (ii) $(a,d)$ taken near $(a,d)_{NS_2}$. (a) Values of $a$ and $d$ are slightly smaller than $a_{NS_i}$, $d_{NS_i}$, $i=1,2$, respectively. (b) (a) Values of $a$ and $d$ are slightly bigger than $a_{NS_i}$, and $d_{NS_i}$, $i=1,2$, respectively. }
\label{cica_patru}
\end{center}
\end{figure}


\begin{figure}
\begin{center}
\includegraphics[scale=1]{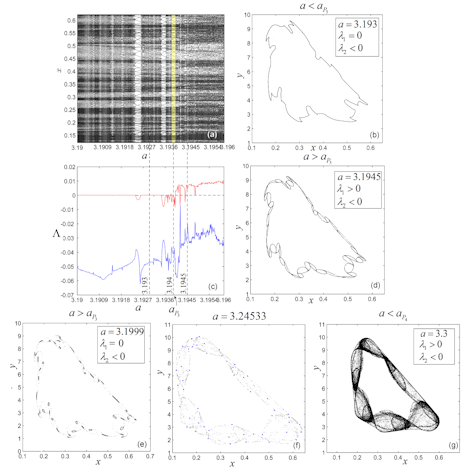}
\caption{Point $P_3(3.194)$. The small stable periodic window (yellow plot), centered at about $P_3$. (a) Bifurcation diagram of $x_n$ variable for $a\in[3.19,3.196]$. (b) Phase plot of a quasiperiodic orbit for $a=3.193<a_{P_3}$. (c) Lyapunov spectrum $\Lambda$ for $a\in[3.19,3.196]$; (d) Phase plot of a chaotic orbit for $a=3.1945>a_{P_3}$. (e) Phase plot of a quasiperiodic attractor for $a=3.1999$, composed by 45 invariant circles. (f) For $a=3.3<a_{P_4}$, because $\sum\lambda_i=0.064-0.015>0$, the attractor is K--Y chaotic }
\label{noua}
\end{center}
\end{figure}

\begin{figure}
\begin{center}
\includegraphics[scale=1]{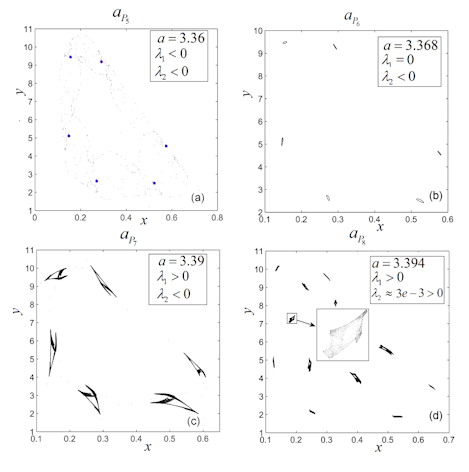}
\caption{Phase portraits for points $P_{5-8}$. (a) Stable orbit for $a=3.36$. (b) Quasi-periodic orbit for $a=3.368$. (c) Chaotic attractor for $a=3.39$. (d) Hyperchaotic attractor for $a=3.394$. Because in this case $\lambda_2=3e-3$, the attractor is weak hyperchaotic. While the chaotic orbit (Fig. \ref{zece} (c)) consists of 6 islands-like sets, the hyperchaotic attractor (Fig. \ref{zece} (d)) is composed of 12 islands-like sets, each of them with different shape (see the enlarge detail). For $a=3.36$, the attractor are reached after long chaotic transient, which presents the shape of the K--Y chaotic attractor at $a=3.3$ (Fig. \ref{noua} (f))}
\label{zece}
\end{center}
\end{figure}

\begin{figure}
\begin{center}
\includegraphics[scale=1]{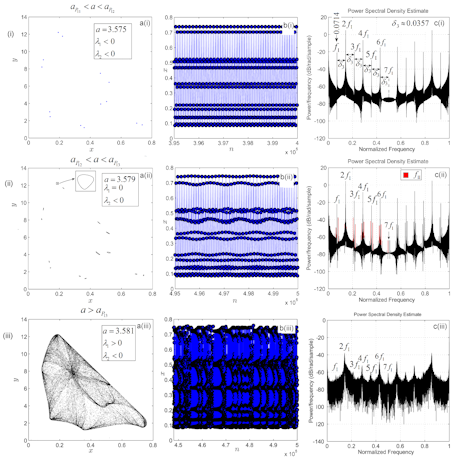}
\caption{Attractors in the narrow window $[P_{10},P_{13}]$. (a) Phase portraits. (b) Time series. (c) Power spectral density. (i) $a\in(a_{P_{11}},a_{P_{12}})$. For $a=3.575$ the orbit is a stable period-14 orbit. Now, the previous two main frequencies $f_0$ and $f_{01}$, where their harmonics have unified, generate the main frequency $f_I\approx0.07143$ and its harmonics $kf_I$, for $k=2,...7$ with offset $\delta_3\approx0.0357$. (ii) $a\in(a_{P_{12}},a_{P_{13}})$. For $a=3.579$ the orbit is quasiperiodic (see the zoomed detail); two symmetric series of new peaks appear in PSD (red plot), revealing the birth of the second main frequency $f_{II}$. However, new frequencies tend to appear, indicating the proximity of chaos. (iii) $a>a_{P_{13}}$. For $a=3.581$, the attractor is K==Y chaotic: $\sum\lambda_i=0.065-0.060>0$. In the broadband of PSD, former frequencies are still visible }
\label{unspe}
\end{center}
\end{figure}

\begin{figure}
\begin{center}
\includegraphics[scale=1]{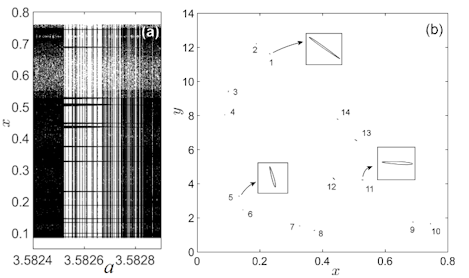}
\caption{(a) Bifurcation diagram for $a\in[3.5824,3.5829]$. (b) Phase portrait of a quasiperiodic orbit for $a$ close to $3.5826$. The enlarged views unveil that there are 14 islands-like sets}
\label{doispe}
\end{center}
\end{figure}

\begin{figure}
\begin{center}
\includegraphics[scale=0.75]{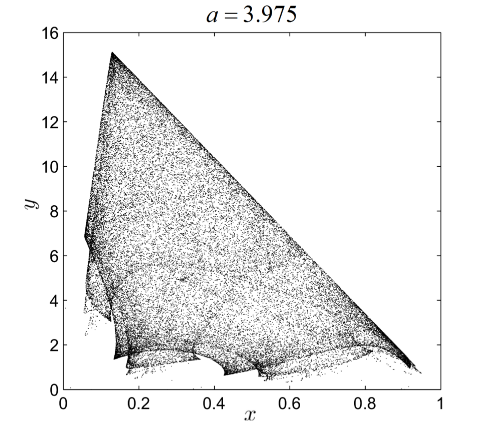}
\caption{Hyperchaotic attractor in the window $P_{14}(3.965),P_{17}(3.9999)]$, for $a=a_{P_{16}}(3.975)$ }
\label{treispe}
\end{center}
\end{figure}

\begin{figure}
\begin{center}
\includegraphics[scale=1]{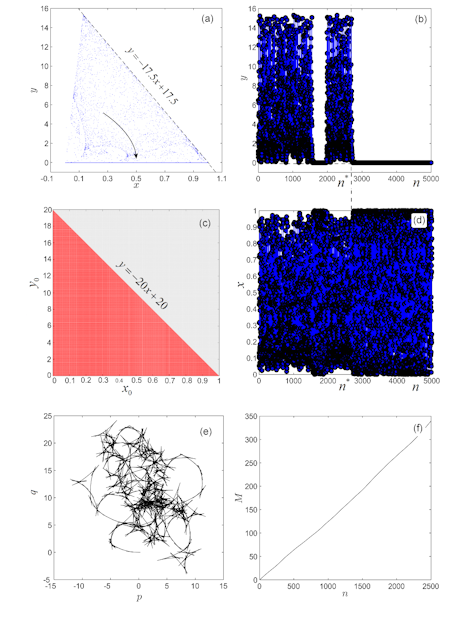}
\caption{The K==Y chaotic attractor $\mathcal{A}$, for $a=4$: $\sum\lambda_i=0.693-0.147>0$. For $n>n^*(x_0,y_0)$, with $n^*(x_0,y_0)$ being some positive integer, the attractor is trapped by the horizontal line $[0,1]$. (a) Phase portrait (the arrow shows the attractor tendency). (b) Time series of the component $y_n$ (for clarity, only the first 5000 iterations are plotted). (c) Attraction basin (red plot) of the chaotic transient attractor. The numerically found frontier of the attraction basin coincides with the line $y=-20x+20$. (d) Time series of the component $x_n$. Dotted vertical line in Figs. (d) and (b) indicates the moment when the algorithm applied (at $n=n^*$); (e) Dynamics of the translation components $(p,q)$ in terms of 0-1 test; (f) Mean-square displacement $M$ as a function of $n$}
\label{paispe}
\end{center}
\end{figure}

\begin{figure}
\begin{center}
\includegraphics[scale=.8]{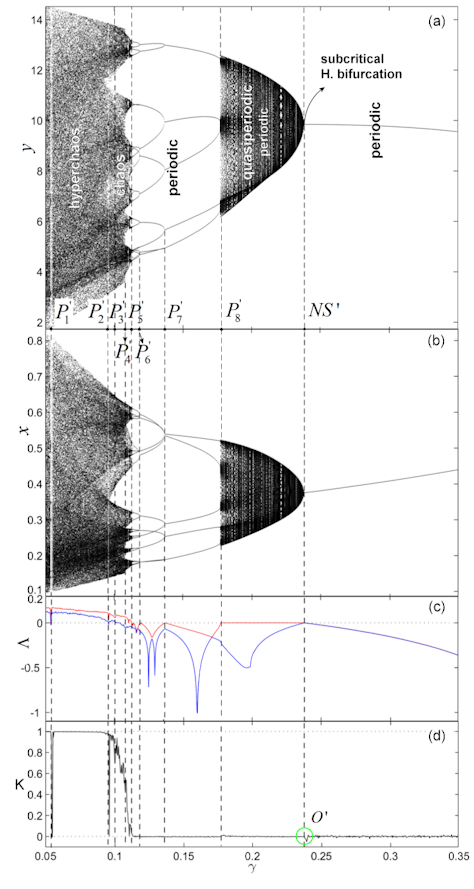}
\caption{Variation of the anticontrolled system \eqref{e1} with $d=3.5$ and $a=4$, versus $\gamma$. (a) and (b) components $x_n$ and $y_n$ respectively. (c) Lyapunov spectrum $\Lambda$. (d) Variation of the asymptotic growth rate $K$ fromthe `0-1' test}
\label{antictrl}
\end{center}
\end{figure}

\begin{figure}
\begin{center}
\includegraphics[scale=1]{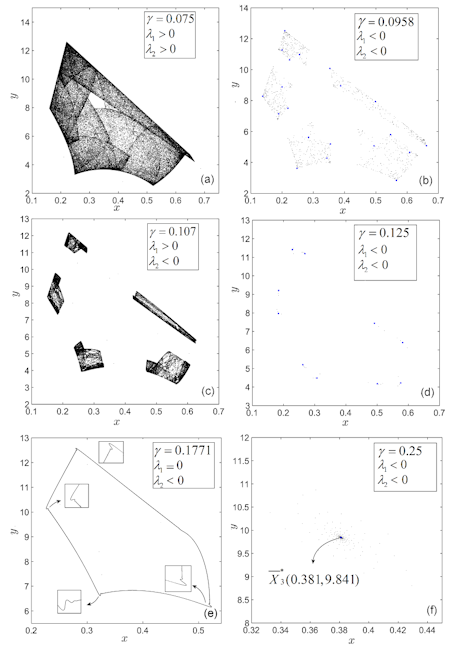}
\caption{Anticontrolled attractors, phase poirtraits. (a) Anticontrolled hyperchaotic attractor for $\gamma=0.075$; both LEs are positive, $\lambda_{1,2}>0$. (b) Anticontrolled period-20 stable periodic orbit for $\gamma=0.0958$; $\lambda_1<0$, $\lambda_2<0$. The (hyper)chaotic transients to the orbit points (blue dots) are reminiscences for the former hyperchaotic attractor. (c) Anticontrolled chaotic attractor for $\gamma=0.107$; $\lambda_1>0$, $\lambda_2<0$. (d) Anticontrolled stable period-10 orbit for $\gamma=0.125$; $\lambda_1<0$, $\lambda_2<0$. (e) Anticontrolled quasiperiodic attractor even after the beginning of the quasiperiodic window $(P_8',NS')$. (f) Anticontrolled stable fixed point for $\gamma=0.25>\gamma_{NS'}$ }
\label{saispe}
\end{center}
\end{figure}


\begin{figure}
\begin{center}
\includegraphics[scale=.8]{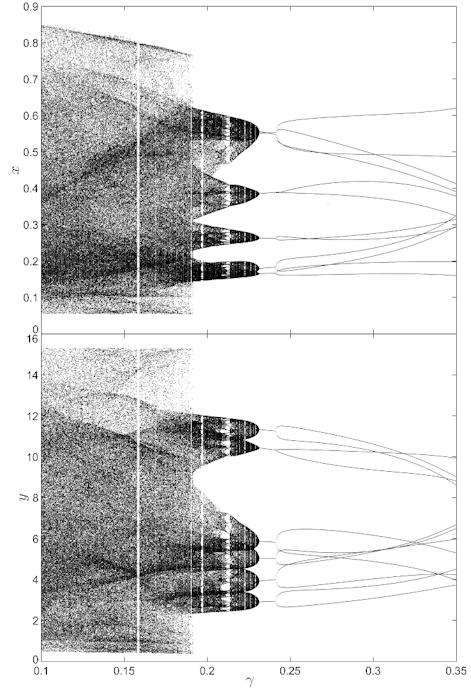}
\caption{Anticontrol of system \eqref{e1} for $d=3.5$ and $a=4$, with perturbations of $x_n$ variable at every $\Delta_1=2$ steps and of $y_n$ variable at every $\Delta_2=3$ steps. (a) Bifurcation diagram of the $x_n$ variable. (b) Bifurcation diagram of the second variable $y_n$ }
\label{saptespe}
\end{center}
\end{figure}

\begin{figure}
\begin{center}
\includegraphics[scale=1]{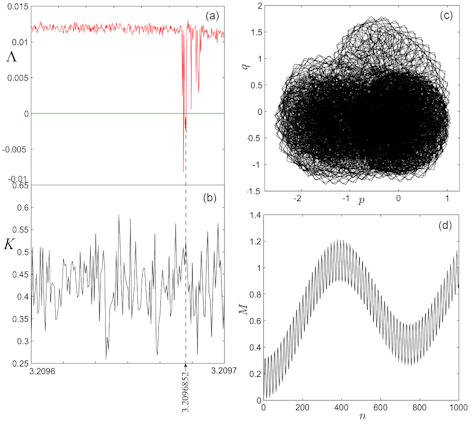}
\caption{SNA of the prey-predator system \eqref{eq1} for $a=3.2096852$. (a), (b) LEs and $K$ value within a small neighborhood of $a=3.2096852$, respectively. (c) Plot of $q$ versus $p$. (b) Mean-square displacement $M$ as a function of $n$}
\label{ultim}
\end{center}
\end{figure}

\begin{figure}
\begin{center}
\includegraphics[scale=.8]{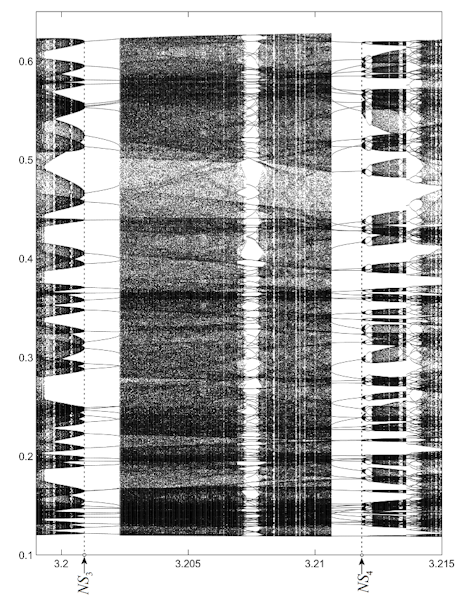}
\caption{Bifurcation diagram for $a\in(3.199,3.215)$. Within this parameter $a$ range, the system undergoes two presumably $N-S$ bifurcations at points $NS_3$ and $NS_4$ }
\label{ultima_sn}
\end{center}
\end{figure}

\begin{figure}
\begin{center}
\includegraphics[scale=1]{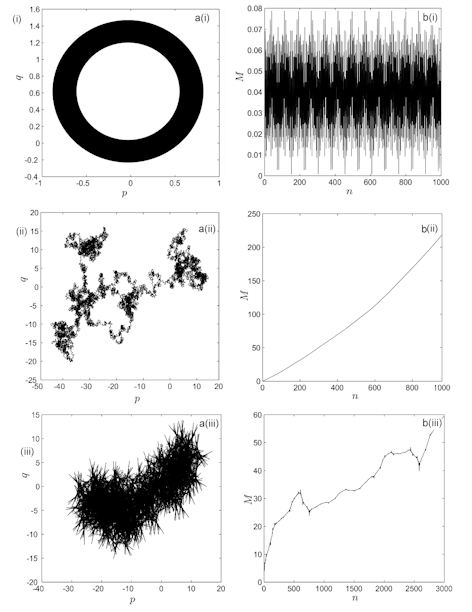}
\caption{The `0-1' test. (a) Plot of $q$ versus $p$. (b) Mean-square displacement $M$ as a function of $n$. (i) Regular dynamics of the logistic map $x_{n+1}=rx_n(1-x_n)$ for $r=3.55$. (ii) Chaotic dynamics of the logistic map, for $r=4$. (iii) SNA of the GOPY model $x_{n+1}=2a\tanh(x_n)\cos(2\pi\theta_n)$, $\theta_{n+1}=\theta_n+\omega$, for $a=1.5$, and $\omega=(\sqrt{5}-1)/2$ \cite{sna}}
\label{apendice}
\end{center}
\end{figure}


\begin{thebibliography}{100}

	\bibitem{prey2} Liu, X., Xiao, D.: Complex dynamic behaviors of a discrete-time predator--prey system. Chaos, Sol. Fract. \textbf{32}(1), 80--94 (2007)

	\bibitem{prey4} Debaldev, J.: Chaotic dynamics of a discrete predator-prey system with prey refuge. Appl. Math. Comput. \textbf{224}(1), 848-865 (2013)

    \bibitem{muica}Zhang, H., Cong, X., Huang, T., Ma, S., Pan, G.: Neimark-Sacker-Turing Instability and Pattern Formation in a Spatiotemporal Discrete Predator-Prey System with Allee Effect. Discrete Dyn. Nat. Soc. 8713651 (2018)

	\bibitem{prey6} Fan, M., Wang, K.: Periodic solutions of a discrete time non-autonomous ratio-dependent predator--prey system. Math. Comput. Model. \textbf{35}(-10), 951--61 (2002)

    \bibitem{baga_altu}Agiza, H.N., ELabbasy, E.M., EL-Metwally, H., Elsadany, A.A.: Chaotic dynamics of a discrete prey-predator model with Holling type II. Nonlinear Anal. Real. \textbf{10}(1), 116--129 (2009)

	\bibitem{prey7} Huo, H.F., Li W.T.: Stable periodic solution of the discrete periodic Leslie--Gower predator--prey model. Math. Comput. Model. \textbf{40}(3-4), 261--9 (2004)

	\bibitem{dancax} Danca, M.-F., Codreanu, S., Bako, B.: Detailed analysis of a nonlinear prey-predator model. J. Biol. Phys. \textbf{23}(1), 11-20 (1997).

\bibitem{lafel}Maynard S.J.: Mathematical Ideas in Biology, Cambridge University Press, London, 1968

	\bibitem{prey8} Ghaziani, R.K., Govaerts, W., Sonck, C.: Resonance and bifurcation in a discrete-time predator-prey system with Holling functional response, Nonlinear Anal. Real. \textbf{13}(3)   1451-1465 (2012)

	\bibitem{prey9} He, Z., Lai, X.: Bifurcation and chaotic behavior of a discrete-time predator-prey system, Nonlinear Anal. Real. \textbf{12}(1), 403-417 (2011)

	\bibitem{prey1} Khan, A.Q.: Neimark-Sacker bifurcation of a two-dimensional discrete-time predator-prey model. SpringerPlus. \textbf{5}, 126 (2016)

	\bibitem{sna} Grebogi, C., Ott, E., Pelikan, S., Yorke, J.: Strange attractors that are not chaotic. Phys, D \textbf{13}(1-2), 261 (1984)

	\bibitem{kaplan} Argyris, J., Faust, G., Haase, M.: Die Erforschung des Chaos. Braunschweig: Vieweg Verlag, 1994


	\bibitem{guk} Guckenheimer J., Holmes, P.: Nonlinear Oscillations, Dynamical Systems, and Bifurcations of Vector Fields, Springer-Verlag, New York, 1983.


        \bibitem{kapa}Kapitaniak, T., Wojewoda, J.: Chaos in a limit-cycle system with almost periodic excitation. Phys. A: Math. Gen. \textbf{21}, L843-L847 (1988)

	\bibitem{cris} Grebogi, C., Ott, E., Yorke, J.A.: Crises, sudden changes in chaotic attractors and transient chaos. Phys. D. \textbf{7}(1-3), 181--200 (1983)

    \bibitem{chen} Chen, G., Lai, D.: Feedback control of Lyapunov exponents for discrete-time dynamical systems. Int. J. Bif. Chaos \textbf{6}, 1341--1349 (1996)

	\bibitem{gm2} Mat\`{\i}as, M.A., G\"{u}\'{e}mez, J.: Stabilization of chaos by proportional pulses in system variables. Phys. Rev. Lett. \textbf{72}, 1455-1458 (1994)

	\bibitem{gm1} G\"{u}\'{e}mez, J., Mat\`{\i}as, M.A.: Control of chaos in unidimensional maps. Phys. Lett. A. \textbf{181} 29-32, (1993)

	\bibitem{in} Yang, W., Ding, M., Mandell, A.J., Ott, E.: Preserving chaos: Control strategies to preserve complex dynamics with potential relevanceto biological disorders. Phys. Rev. E. \textbf{51}, 102--110 (1995)

	\bibitem{alta} Loyd, A.: The Coupled Logistic Map: A Simple Model for the Effects of Spatial Heterogeneity on Population Dynamics. J. theor. Biol. \textbf{173}(3), 106--129 (1995)

	\bibitem{prasa} Prasad, A., Negi, S.S., Ramaswamy, R.: Strange Nonchaotic Attractors, Int. J. Bifurc. Chaos \textbf{11}(02), 291-309 (2001)

	\bibitem{laka} Venkatesan, A., Lakshmanan, M.: Different routes to chaos via strange nonchaotic attractors in a quasiperiodically forced system. Phys. Rev. E. \textbf{58}, 3008 (1998)


	\bibitem{01} Gottwald, G.A., Melbourne, I.: A new test for chaos in deterministic systems. Proc. R. Soc. London. Ser. A \textbf{460}, 603 (2004)


	\bibitem{012} Nicol, M., Melbourne, I., Ashwin, P.: Euclidean extensions of dynamical systems. Nonlinearity. \textbf{14}(2), 275 (2001)


	\bibitem{014} Gopal, R., Venkatesan, A., Lakshmanan, M.: Applicability of 0-1 Test for Strange Nonchaotic Attractors. Chaos \textbf{23}, 023123 (2013)


	\bibitem{land} Lanford, O.E.: Bifurcation of periodic solutions into invariant tori: the work of Ruelle and Takens, in ``Nonlinear Problems in the Physical Sciences and Biology'', Springer Lecture Notes \textbf{322}, 1973

\bibitem{sup1}Malchow, H.: Motional Instabilities in Prey–Predator Systems. J. Theor. Biol. \textbf{204}(4), 639-647 (2000)

\bibitem{sup2}    Sun G.-Q., Wu, Z-Y, Wang, Z., Jin, Z.:Influence of isolation degree of spatial patterns on persistence of populations. Nonlinear Dyn. \textbf{83}(1–2), 811–819 (2016)

\bibitem{sup3} Sherratt, J. A.: How Does Nonlocal Dispersal Affect the Selection and Stability of Periodic Traveling Waves? SIAM Journal on Applied Mathematics \textbf{78}(6), (2018)

\bibitem{sup4}Brandt, M. J., Lambin, X.: Movement patterns of a specialist predator, the weasel Mustela nivalis exploiting asynchronous cyclic field vole Microtus agrestis populations. Acta Theriol., \textbf{52}, 13-25 (2007)
\bibitem{sup5} Wieters, E. A., Gaines, S. D., Navarrete, S. A., Blanchette, C. A., Menge, B. A.: Scales of dispersal and the biogeography of marine predator-prey interactions. Am. Nat.,
\textbf{171}, 405-417 (2008)

\end{thebibliography}
\end{document}